\documentclass[journal,10pt,a4paper,twoside]{IEEEtran}

\usepackage{amsthm}
\usepackage{amsmath,amssymb,amsfonts,mathtools}
\usepackage{bm}
\usepackage{dsfont}
\usepackage{physics}

\usepackage{graphicx}
\usepackage{booktabs}
\usepackage{subcaption}
\usepackage{acronym}
\usepackage{textcomp}
\usepackage{xcolor}
\usepackage{cite}
\usepackage{newpxtext,newpxmath}
\usepackage[hidelinks]{hyperref}

\definecolor{ocre}{HTML}{800000}
\definecolor{green}{RGB}{0,128,0}
\definecolor{sky}{HTML}{C6D9F1}
\definecolor{skybox}{HTML}{5F86B3}

\newtheorem{proposition}{Proposition}

\title{Quantum MIMO Channel Modeling in Turbulent Free-Space Optical Links}
\author{Heyang Peng, Seid Koudia, Semih Oktay, Mert Bayraktar, and Symeon Chatzinotas, Fellow, IEEE}

\begin{document}
\maketitle
\begin{abstract}
Free-space optical (FSO) links supporting spatial multiplexing provide a natural physical realization of Quantum MIMO channels. We develop a first-principles model for Quantum MIMO channels derived directly from wave-optical propagation through three-dimensional atmospheric turbulence. The framework explicitly accounts for intermodal crosstalk, finite detection apertures, and the system–bath separation induced by spatial-mode projection.
We distinguish between distinguishable and indistinguishable photon regimes, showing that indistinguishability leads to intrinsically many-body interference effects described by matrix permanents. To obtain a completely positive and trace-preserving logical description, we introduce an erasure-extended encoding in which turbulence-induced leakage and photon loss are mapped to flagged erasure states. The resulting Quantum MIMO channel naturally reduces to a correlated $n$-qubit erasure channel, with correlations arising from the shared turbulent medium. Limiting regimes in which correlated Pauli channels emerge as effective approximations are also identified.
\end{abstract}

%
\section{Introduction}

 Quantum communication aims to exploit the fundamental principles of quantum mechanics to enable secure information transfer \cite{koudia2019superposition}, distributed quantum computing, and quantum networking over long distances. Free-space optical (FSO) links are a promising platform for quantum communication due to their ability to support high-dimensional spatial multiplexing, long-distance transmission, and flexible network geometries \cite{11317988}. When multiple spatial modes are transmitted simultaneously through a turbulent atmosphere, the resulting system is naturally interpreted as a quantum multiple-input multiple-output (Quantum MIMO) channel \cite{junaid2025diversity,ur2025mimo,koudia2025crosstalk,koudia2025spatial}. Each spatial mode functions as a distinct transmission ``\textcolor{black}{source}'', while atmospheric turbulence and finite-aperture detection induce coupling, loss, and correlations among the transmitted degrees of freedom \cite{pirandola2021limits}.

In this work, we focus on a physically relevant operating regime in which each spatial mode carries exactly one photon whose polarization encodes a qubit \cite{fabre2020modes, peng2025performance,pengqce}. Logical information is therefore stored in an $n$-qubit polarization register, while the spatial degrees of freedom play the role of a structured environment. Turbulence-induced intermodal coupling \textcolor{black}{redistributes optical power among spatial modes during propagation, with a fraction of the field scattered into spatial modes outside the finite set collected at the receiver; projection onto the retained modes therefore induces loss and renders the reduced polarization dynamics intrinsically non-unitary.
}\footnote{\textcolor{black}{In this work, “spatial modes” refer to orthogonal transverse field modes of the optical field (e.g., Laguerre–Gaussian modes), rather than to physical emitting elements or apertures. A single transmitter aperture may excite multiple spatial modes simultaneously (for example using a spatial light modulator or mode multiplexer), and a single receiver aperture may project onto multiple modes using mode sorting or matched filtering. Conversely, multiple elements that excite the same spatial mode do not constitute independent channels in the modal sense, as they couple to a single orthogonal degree of freedom. Independent channels can instead arise either from orthogonal spatial modes within a single aperture, or from physically separated transmit apertures. The latter case corresponds to spatial multiplexing in the classical sense and may lead to partially independent channels depending on the aperture separation relative to the turbulence coherence length.}}.

Unlike phenomenological noise models commonly employed in quantum information theory, the effective channel acting on polarization qubits in an FSO link is not assumed \emph{a priori}. Instead, it emerges from the underlying wave-optical propagation \cite{Ishimaru1999,Goldsmith2005}, the statistics of the turbulent medium, and the operational choice of transmitter and receiver mode bases. As a result, the induced noise can exhibit nontrivial features, including correlations across qubits, effective non-Markovian behavior along the propagation direction, and qualitative dependence on photon indistinguishability through multi-photon interference effects.

The goal of this paper is to construct a Quantum MIMO channel model directly from first principles. Starting from a wave-optical description of multi-mode propagation through atmospheric turbulence, we derive the effective quantum channel acting on the polarization degrees of freedom after spatial-mode projection and loss. This framework makes explicit how intermodal crosstalk, partial mode collection, and bosonic interference jointly determine the structure of the resulting channel. In particular, we clarify the distinction between distinguishable and indistinguishable photon regimes, show how erasure must be incorporated to obtain a completely positive trace-preserving description, and connect the resulting physically derived Quantum MIMO channels to standard correlated $n$-qubit noise models used in quantum information theory \cite{Wilde2017}.

\section{Model: Quantum MIMO Free-Space Optical Channel}
\begin{figure*}[t]
\centering
\includegraphics[width = 0.9\textwidth]{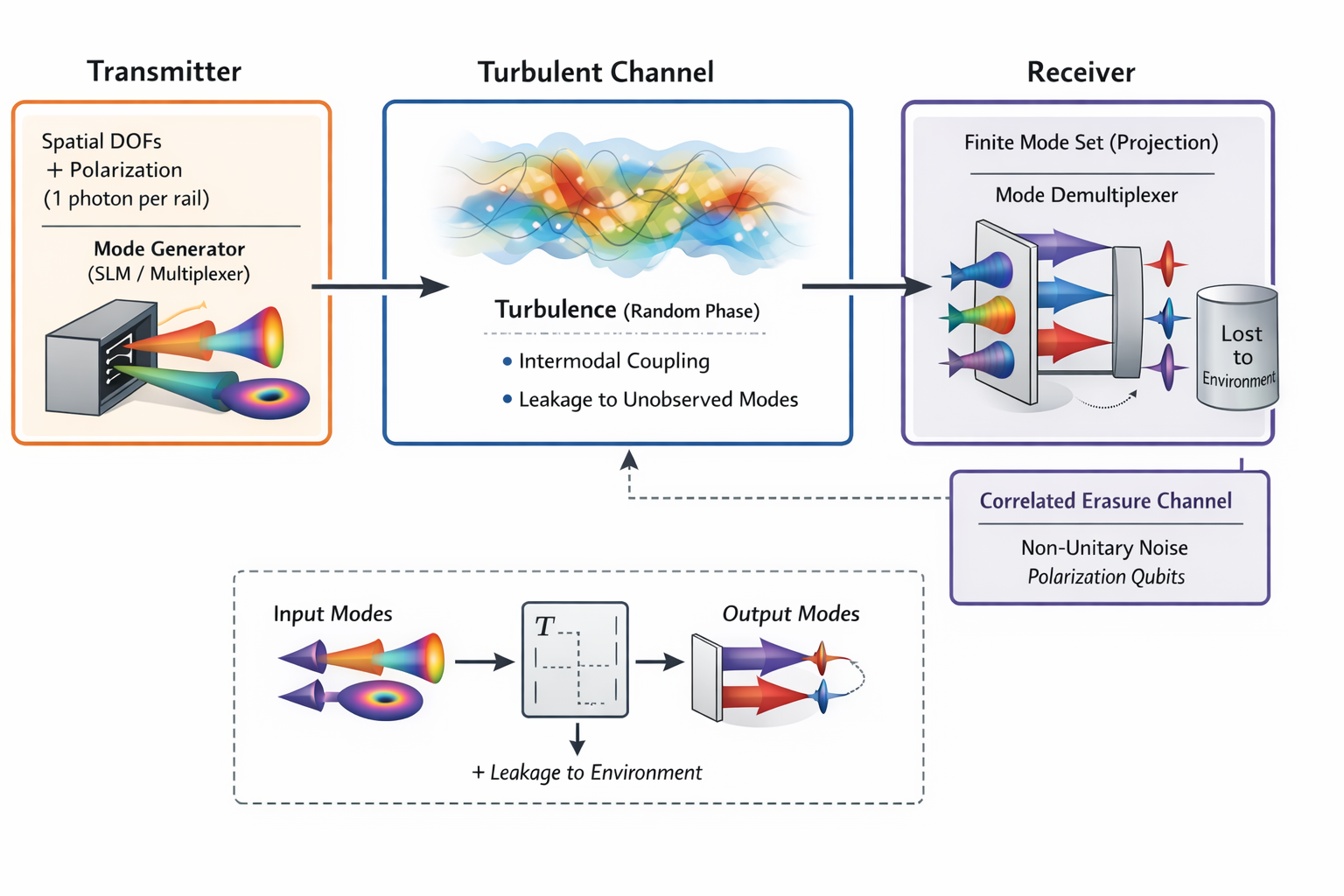}
\caption{\color{black} Schematic of a spatially multiplexed FSO quantum link. Parallel channels are defined by orthogonal spatial degrees of freedom carrying polarization-encoded qubits. Propagation through atmospheric turbulence leads to mixing between channels and partial loss outside the collected set, while the receiver performs a finite-mode projection, giving rise to effective noise in the logical system.}
	\label{fig:general_MIMO}
\end{figure*}
The overall system considered in this work is illustrated in Fig. \ref{fig:general_MIMO}. Multiple orthogonal spatial degrees of freedom, which may be realized as spatial modes or partially separated beams, are used to transmit polarization-encoded qubits in parallel. During propagation, atmospheric turbulence induces coupling between these channels, which can be described by a random mode-mixing operator 
T
T, and scatters optical power into unobserved spatial degrees of freedom. At the receiver, projection onto a finite set of modes defines the effective system subspace, while the remaining degrees of freedom are treated as an environment. This system–bath separation leads naturally to a reduced, generally non-unitary quantum channel acting on the logical polarization degrees of freedom.
\subsection{Quantum MIMO Encoding and Degrees of Freedom}

We consider a single-shot free-space optical transmission in which $n$ orthogonal spatial modes are launched simultaneously from a transmitter plane at $z=0$ \cite{Allen1992,Willner2015}. Each spatial mode carries exactly one photon, and the logical information is encoded in the polarization degree of freedom. This realizes a Quantum MIMO channel, where spatial modes play the role of parallel physical channels.

\color{black}
We let the transmitted spatial modes to be Laguerre--Gaussian (LG) modes, which form a complete orthonormal set of solutions to the paraxial wave equation in cylindrical coordinates $(r,\varphi,z)$. LG modes are labeled by a radial index $p \in \mathbb{N}_0$ and an azimuthal index $\ell \in \mathbb{Z}$, where $\ell$ determines the orbital angular momentum carried by the mode. In this work, we restrict attention to the lowest-radial-order family $p=0$, which is commonly employed in free-space optical systems due to its high mode purity and experimental robustness. At the transmitter plane $z=0$, the normalized transverse field profile of an LG mode with azimuthal index $\ell$ is given by
\color{black}
\[
u_{\ell}(r,\varphi)=
L_0^{(|\ell|)}\!\left(\frac{2r^2}{w_0^2}\right)
\exp\!\left(-\frac{r^2}{w_0^2}\right)
\exp(i\ell\varphi)
\]
with
\color{black}
\[
L_0^{(|\ell|)}
=
\sqrt{\frac{2}{\pi |\ell|!}}\,
\frac{1}{w_0}
\left(\frac{\sqrt{2}\, r}{w_0}\right)^{|\ell|}
\]
where $w_0$ is the beam waist at the transmitter plane. These modes satisfy the orthonormality relation
$\int d^2 r\, u_\ell^*(r,\varphi)\,u_{\ell'}(r,\varphi) = \delta_{\ell\ell'}$
and each photon in mode $\ell$ carries orbital angular momentum $\ell\hbar$.

\color{black}
Each spatial mode supports two orthogonal polarizations, labeled $H$ and $V$, which encode a qubit:
\begin{equation}
\ket{0}_m \equiv \ket{H}_m, \qquad \ket{1}_m \equiv \ket{V}_m.
\end{equation}
The logical polarization Hilbert space is therefore
\begin{equation}
\mathcal{H}_{\mathrm{pol}} = (\mathbb{C}^2)^{\otimes n}.
\end{equation}

The full optical field occupies a much larger Hilbert space, including all transverse spatial modes outside the chosen set and both polarizations. These additional degrees of freedom constitute an effective environment (bath) that is traced out after propagation and spatial-mode projection.

\subsection{Atmospheric Turbulence Model}

Atmospheric turbulence is modeled as a three-dimensional, statistically homogeneous refractive-index fluctuation field $n_1(\mathbf{r},z)$ with zero mean and \textcolor{black}{modified} von K\'arm\'an power spectrum \cite{AndrewsPhillips200}: 
\begin{equation}
\Phi_n(\boldsymbol{\kappa}) =
0.033\, C_n^2
\left(\kappa^2 + \kappa_0^2\right)^{-11/6}
\exp\!\left(-\frac{\kappa^2}{\kappa_m^2}\right),
\end{equation}
where $\boldsymbol{\kappa}=(\kappa_x,\kappa_y,\kappa_z)$, $\kappa_0=2\pi/L_0$ is set by the outer scale $L_0$, and $\kappa_m=5.92/l_0$ is set by the inner scale $l_0$.

For \textcolor{black}{refractive index structure constant} $C_n^2$ along a propagation distance $L$, the turbulence strength is equivalently characterized by the Fried parameter and the Rytov variance,
\begin{equation}
r_0(L) = \bigl(0.423\,k_0^2 C_n^2 L\bigr)^{-3/5},
\qquad
\sigma_R^2(L) = 1.23\,C_n^2 k_0^{7/6} L^{11/6}.
\end{equation}

The three-dimensional nature of $n_1(\mathbf{r},z)$ induces longitudinal correlations along the propagation direction, which play a crucial role in generating correlated noise in the Quantum MIMO channel.

\subsection{Split-Step Paraxial Propagation and Phase Screens}

We discretize the propagation distance into $K$ slabs of thickness $\Delta z = L/K$, with planes at $z_k = k\Delta z$. The slab-integrated phase screen is
\begin{equation}
\phi_k(\mathbf{r}) =
k_0 \int_{z_k}^{z_{k+1}} n_1(\mathbf{r},z)\,dz
\approx k_0 \Delta z\, n_1(\mathbf{r},z_k).
\end{equation}

Propagation through each slab is modeled by the split-step operator
\begin{equation}
\mathcal{U}_{k+1:k} = \mathcal{P}_{\Delta z} \circ \mathcal{S}_{\phi_k}
\label{eq:propagator}
\end{equation}
where
\begin{equation}
(\mathcal{S}_{\phi_k} f)(\mathbf{r}) = e^{i\phi_k(\mathbf{r})} f(\mathbf{r})
\end{equation}
and $\mathcal{P}_{\Delta z}$ is the Fresnel propagator,
\begin{equation}
(\mathcal{P}_{\Delta z}f)(\mathbf{r}) =
\frac{e^{ik_0\Delta z}}{i\lambda \Delta z}
\int d^2\mathbf{r}'\,
\exp\!\left(\frac{ik_0}{2\Delta z}
\|\mathbf{r}-\mathbf{r}'\|^2\right) f(\mathbf{r}').
\end{equation}

Atmospheric turbulence is assumed polarization-independent in the bulk, so both $H$ and $V$ components experience identical spatial evolution.
\color{black}
Accordingly, benefiting from Fourier transform property of the indefinite integrals Huygens-Fresnel integral turns into \cite{GoodmanFourierOptics}
\color{black}: 
\begin{equation}
\begin{aligned}
u_r(r_x,r_y,L)
&= \exp(jkL)\,
\mathcal{F}^{-1}\!\Big[
\mathcal{F}(s-f)\,
\mathcal{F}\{u_{\ell}(r,\varphi)\}\,
\mathcal{F}(s-f) \\
&\qquad \times
\exp\!\left(\frac{jk}{2L}(s_x^2+s_y^2)\right)
\Big] \\
&= \exp(jkL)\,
\mathcal{F}^{-1}\!\left\{
u_{\ell}(f)\,
\exp\!\left(-\frac{j2\pi L}{k}f_r^2\right)
\right\} \\
&= \exp(jkL)\,
\mathcal{F}^{-1}\!\left\{
u_{\ell}(f_x,f_y)\,
\exp\!\left(-\frac{j2\pi L}{k}
\left(f_x^2+f_y^2\right)\right)
\right\}.
\end{aligned}
\end{equation}

The operators $\mathcal{F}$ and $\mathcal{F}^{-1}$
denote the two-dimensional Fourier transform and its inverse, respectively.
In the first line of Eq.~(9), the coordinate transformations are explicitly
indicated adjacent to the Fourier operators $\mathcal{F}$ and
$\mathcal{F}^{-1}$ for clarity.
The function $U_s(f)=U_s(f_x,f_y)$ corresponds
to the spatial-frequency representation of the source field in Cartesian coordinates, where
$(f_x,f_y)$ are the transverse spatial frequencies associated with the
wavelength $\lambda$. 
The field $u_r(r,L)=u_r(r_x,r_y,L)$ represents the complex optical field
evaluated at the receiver plane, which is located at a propagation distance
$L$ from the source plane. The transverse spatial coordinates at the receiver in Cartesian coordinates
are denoted by $r=(r_x,r_y)$.

\color{black}
\subsection{Inter-Mode Crosstalk as a Physical Mechanism}

The action of $\mathcal{U}_{k+1:k}$ on the spatial field causes power initially in one LG mode to scatter into other modes. This inter-mode coupling arises because the random phase $\phi_k(\mathbf{r})$ breaks the orthogonality of the LG basis. As a result, after propagation and projection onto a finite set of modes at the receiver, energy initially confined to a given spatial channel is redistributed among all channels.

This turbulence-induced mode mixing is the physical origin of spatial crosstalk in the Quantum MIMO channel and leads to correlated noise acting on the polarization qubits.

\subsection{Single-particle mode mixing and intermodal crosstalk}

Let $\{ w_m^{(k)}(\mathbf{r}) \}_{m=1}^n$ denote an orthonormal set of spatial modes defining the kept subspace at longitudinal plane $z_k$. The annihilation operator for spatial mode $m$ and polarization $p \in \{H,V\}$ at plane $z_k$ is
\begin{equation}
    \hat a_{m,p}^{(k)} =
    \int d^2\mathbf{r}\, w_m^{(k)*}(\mathbf{r}) \hat E_p(\mathbf{r},z_k),
\end{equation}
where $\hat E_p(\mathbf{r},z)$ is the positive-frequency field operator.

Because the propagation is linear and passive, the transformation from plane $z_k$ to $z_{k+1}$ induces a unitary mixing on the full set of spatial modes. Restricting attention to the kept subspace and its complement, the mode operators transform as
\begin{equation}
    \begin{pmatrix}
        \hat{\bm a}_S^{(k+1)} \\
        \hat{\bm a}_E^{(k+1)}
    \end{pmatrix}
    =
    \begin{pmatrix}
        T_k & R_k \\
        \tilde R_k & \tilde T_k
    \end{pmatrix}
    \begin{pmatrix}
        \hat{\bm a}_S^{(k)} \\
        \hat{\bm a}_E^{(k)}
    \end{pmatrix},
\end{equation}
where $\hat{\bm a}_S^{(k)}$ collects the $2n$ system-mode operators $(m,p)$ and $\hat{\bm a}_E^{(k)}$ collects the bath modes.

The $2n \times 2n$ matrix $T_k$ captures intermodal crosstalk within the kept subspace, including polarization mixing if present. The matrix $R_k$ captures leakage into bath modes. Unitarity of the full transformation implies
\begin{equation}
    T_k T_k^\dagger + R_k R_k^\dagger = \mathbb{I}_{2n}.
\end{equation}

At this stage, the description is purely single-particle and applies equally to distinguishable and indistinguishable photons. The distinction between these regimes arises when lifting this transformation to the many-body Hilbert space.

\section{Results: Explicit Derivation of the Quantum MIMO Channel}

\subsection{From Turbulence to the Single-Photon Propagation Operator}

We begin from the scalar paraxial wave equation governing the slowly varying envelope
$U(\mathbf{r},z)$ of a monochromatic optical field propagating through a weakly
fluctuating refractive index
$n(\mathbf{r},z)=n_0+n_1(\mathbf{r},z)$:
\begin{equation}
\left(
\frac{\partial}{\partial z}
- \frac{i}{2k_0}\nabla_\perp^2
\right)
U(\mathbf{r},z)
=
i k_0 n_1(\mathbf{r},z)\,U(\mathbf{r},z),
\label{eq:paraxial_eq}
\end{equation}
where $k_0=2\pi/\lambda$ and $\nabla_\perp^2$ is the transverse Laplacian.

We discretize the propagation interval $[0,L]$ into $K$ slabs of thickness
$\Delta z=L/K$, with planes at $z_k=k\Delta z$. Over a single slab, we neglect
longitudinal diffraction inside the slab and integrate
\eqref{eq:paraxial_eq} to first order in $\Delta z$, obtaining the split-step
approximation
\begin{equation}
U(\mathbf{r},z_{k+1})
=
\mathcal{P}_{\Delta z}
\!\left[
e^{i\phi_k(\mathbf{r})} U(\mathbf{r},z_k)
\right],
\label{eq:split_step_solution}
\end{equation}
where the random phase screen is
\begin{equation}
\phi_k(\mathbf{r})
=
k_0 \int_{z_k}^{z_{k+1}} n_1(\mathbf{r},z)\,dz
\approx k_0 \Delta z\, n_1(\mathbf{r},z_k),
\label{eq:phase_screen}
\end{equation}
and the Fresnel propagator $\mathcal{P}_{\Delta z}$ is
\begin{equation}
(\mathcal{P}_{\Delta z}f)(\mathbf{r})
=
\frac{e^{ik_0\Delta z}}{i\lambda\Delta z}
\int d^2\mathbf{r}'\,
\exp\!\left[
\frac{i k_0}{2\Delta z}
\|\mathbf{r}-\mathbf{r}'\|^2
\right] f(\mathbf{r}').
\label{eq:fresnel_kernel}
\end{equation}

Because $n_1(\mathbf{r},z)$ is drawn from a single three-dimensional von K\'arm\'an
random field, the random variables $\{\phi_k(\mathbf{r})\}$ are statistically
correlated along $k$. This longitudinal correlation is the microscopic origin of
correlated noise in the resulting Quantum MIMO channel.

Defining the single-photon transverse propagation operator by Eq. \ref{eq:propagator},
the full propagation from $z_0=0$ to $z_k$ is
\begin{equation}
\mathcal{U}_{k:0}
=
\mathcal{U}_{k:k-1}\circ\cdots\circ\mathcal{U}_{1:0}.
\end{equation}

Atmospheric turbulence is assumed polarization independent in the bulk, so
$\mathcal{U}_{k+1:k}$ acts identically on both polarization components. The
polarization degree of freedom therefore factors out at the single-photon level
and is only affected indirectly through spatial-mode mixing and projection, as
shown in the subsequent derivations.

\subsection{From Propagation to Mode Overlap Coefficients and Inter-Mode Crosstalk}

We now connect the split-step propagation operator $\mathcal{U}_{k+1:k}$ to the
inter-mode crosstalk coefficients that define the Quantum MIMO coupling matrix.

\subsubsection{Kept spatial mode bases at intermediate planes}

At each plane $z_k$, we define an orthonormal set of $n$ \emph{kept} spatial modes
$\{w_m^{(k)}(\mathbf{r})\}_{m=1}^n \subset L^2(\mathbb{R}^2)$ satisfying
\begin{equation}
\int d^2\mathbf{r}\, w_m^{(k)*}(\mathbf{r}) w_{m'}^{(k)}(\mathbf{r})
=
\delta_{mm'}.
\label{eq:kept_orthonormal}
\end{equation}
This set defines the operational ``system'' spatial subspace at plane $z_k$.
We complete it to a full orthonormal basis
$\{w_\mu^{(k)}(\mathbf{r})\}_{\mu\ge 1}$ of $L^2(\mathbb{R}^2)$ such that
$\mu\le n$ corresponds to kept modes and $\mu>n$ corresponds to bath modes.

\subsubsection{Single-photon overlap coefficients}

Consider a single transverse field mode $w_{\mu'}^{(k)}$ at plane $z_k$. After one
split-step slab, the transverse field at plane $z_{k+1}$ is
\begin{equation}
w_{\mu'}^{(k)} \ \mapsto\ \mathcal{U}_{k+1:k} w_{\mu'}^{(k)}.
\end{equation}
Expanding the propagated mode in the output basis $\{w_\mu^{(k+1)}\}$ gives
\begin{equation}
\mathcal{U}_{k+1:k} w_{\mu'}^{(k)}
=
\sum_{\mu\ge 1} t^{(k)}_{\mu\leftarrow \mu'}\, w_\mu^{(k+1)},
\label{eq:mode_expansion}
\end{equation}
where the expansion coefficients are the overlap integrals
\begin{equation}
t^{(k)}_{\mu\leftarrow \mu'}
=
\bigl\langle
w_\mu^{(k+1)},
\mathcal{U}_{k+1:k} w_{\mu'}^{(k)}
\bigr\rangle
=
\int d^2\mathbf{r}\,
w_\mu^{(k+1)*}(\mathbf{r})\,
(\mathcal{U}_{k+1:k} w_{\mu'}^{(k)})(\mathbf{r}).
\label{eq:t_coeff_def}
\end{equation}
Equation \eqref{eq:t_coeff_def} is the fundamental optical quantity that encodes
turbulence-induced spatial mixing.

\subsubsection{Explicit dependence on the phase screens and Fresnel kernel}

Substituting the split-step form
$\mathcal{U}_{k+1:k}=\mathcal{P}_{\Delta z}\circ\mathcal{S}_{\phi_k}$
into \eqref{eq:t_coeff_def} yields an explicit integral expression:
\begin{align}
t^{(k)}_{\mu\leftarrow \mu'}
&=
\int d^2\mathbf{r}\,
w_\mu^{(k+1)*}(\mathbf{r})
(\mathcal{P}_{\Delta z}[e^{i\phi_k} w_{\mu'}^{(k)}])(\mathbf{r})
\nonumber\\
&=
\int d^2\mathbf{r}\,
w_\mu^{(k+1)*}(\mathbf{r})\,
\frac{e^{ik_0\Delta z}}{i\lambda\Delta z}
\nonumber\\
&\quad \times \int d^2\mathbf{r}'\,
\exp\!\left[
\frac{i k_0}{2\Delta z}\|\mathbf{r}-\mathbf{r}'\|^2
\right]
e^{i\phi_k(\mathbf{r}')}
w_{\mu'}^{(k)}(\mathbf{r}').
\label{eq:t_coeff_explicit}
\end{align}
Thus $t^{(k)}_{\mu\leftarrow \mu'}$ depends on:
(i) the Fresnel kernel (hence $\lambda$ and $\Delta z$),
(ii) the random phase screen $\phi_k(\mathbf{r})$ (hence the turbulence field),
and (iii) the chosen kept/bath basis functions at planes $k$ and $k+1$.

Since $\phi_k$ is itself determined by the von K\'arm\'an model parameters
$(C_n^2,L_0,l_0)$ (or $(r_0,\sigma_R^2)$), Eq.~\eqref{eq:t_coeff_explicit} provides
a direct optical route from turbulence statistics to inter-mode crosstalk
coefficients.

\subsubsection{Inter-mode crosstalk in the kept subspace}

Define the \emph{kept-to-kept} spatial crosstalk matrix for the slab $k$ as
\begin{equation}
(T^{\perp}_k)_{m m'}
:=
t^{(k)}_{m\leftarrow m'}
=
\left\langle
w_m^{(k+1)},
\mathcal{U}_{k+1:k} w_{m'}^{(k)}
\right\rangle,
\qquad m,m'\in\{1,\dots,n\}.
\label{eq:Tperp_def}
\end{equation}
This $n\times n$ matrix describes how energy initially in kept mode $m'$ at plane
$z_k$ is redistributed among kept modes at plane $z_{k+1}$. The off-diagonal
entries $m\neq m'$ quantify inter-mode crosstalk.

Similarly, the \emph{kept-to-bath} leakage coefficients are
\begin{equation}
(R^{\perp}_k)_{\mu m'}
:=
t^{(k)}_{\mu\leftarrow m'},
\qquad \mu>n,\ m'\le n,
\end{equation}
which quantify scattering out of the kept subspace.

\subsubsection{A unitarity identity implying contraction on the kept subspace}

Because $\mathcal{U}_{k+1:k}$ is a unitary operator on $L^2(\mathbb{R}^2)$, the
full infinite-dimensional matrix $[t^{(k)}_{\mu\leftarrow\mu'}]_{\mu,\mu'}$ is
unitary. Restricting to the kept subspace implies the contraction identity
\begin{equation}
T_k^{\perp} (T_k^{\perp})^\dagger + R_k^{\perp} (R_k^{\perp})^\dagger = I_n,
\label{eq:contraction_spatial}
\end{equation}
where $R_k^{\perp}$ denotes the rectangular matrix with entries
$(R_k^{\perp})_{\mu m'}=t^{(k)}_{\mu\leftarrow m'}$ for $\mu>n$ and $m'\le n$.
Equation \eqref{eq:contraction_spatial} formalizes that loss/leakage to the bath
renders the effective kept-subspace evolution non-unitary.

The next step is to include polarization and second quantization, yielding the
$2n\times 2n$ system matrix $T_k$ acting on polarization-resolved mode operators
and setting the stage for an explicit Kraus construction on the logical
$n$-qubit space.

\subsection{Construction of the Polarization-Resolved System Matrix $T_k$}

We now derive the $2n\times 2n$ system matrix $T_k$ that governs the input--output
relation of polarization-resolved annihilation operators in the kept subspace.

\subsubsection{Polarization-resolved mode operators}

Let $\hat E_p(\mathbf{r},z)$ denote the positive-frequency field operator envelope
for polarization $p\in\{H,V\}$. For each plane $z_k$, define the annihilation
operator for spatial mode $\mu$ and polarization $p$ by
\begin{equation}
\hat a^{(k)}_{\mu,p}
=
\int d^2\mathbf{r}\,
w_\mu^{(k)*}(\mathbf{r})\,\hat E_p(\mathbf{r},z_k).
\label{eq:annihilator_def}
\end{equation}
We collect the $2n$ kept operators into a column vector
\begin{equation}
\hat{\bm a}^{(k)}_S
=
\bigl(
\hat a^{(k)}_{1,H},\hat a^{(k)}_{1,V},\dots,\hat a^{(k)}_{n,H},\hat a^{(k)}_{n,V}
\bigr)^{T},
\end{equation}
and similarly define $\hat{\bm a}^{(k)}_E$ for all bath modes $\mu>n$.

\subsubsection{Heisenberg evolution induced by split-step propagation}

Because bulk turbulence is polarization-independent, the single-photon propagator
on the one-particle space factorizes as
\begin{equation}
\mathcal{U}^{(1)}_{k+1:k}=\mathcal{U}^{(\perp)}_{k+1:k}\otimes I_{\mathrm{pol}},
\end{equation}
where $\mathcal{U}^{(\perp)}_{k+1:k}=\mathcal{P}_{\Delta z}\circ\mathcal{S}_{\phi_k}$
acts on transverse functions and $I_{\mathrm{pol}}$ acts on polarization.

Let $\hat U_{k+1:k}$ denote the corresponding passive unitary acting on the full
bosonic Fock space. It satisfies the standard linear-optics Heisenberg relation
\begin{equation}
\hat U_{k+1:k}^\dagger\, \hat a^{(k+1)}_{\mu,p}\,\hat U_{k+1:k}
=
\sum_{\mu'\ge 1}
t^{(k)}_{\mu\leftarrow \mu'}\, \hat a^{(k)}_{\mu',p},
\label{eq:heisenberg_spatial}
\end{equation}
where $t^{(k)}_{\mu\leftarrow\mu'}$ are exactly the overlap coefficients in
\eqref{eq:t_coeff_def}--\eqref{eq:t_coeff_explicit}. Polarization is unchanged in
bulk propagation, hence the absence of mixing between $p$ and $p'$ in
\eqref{eq:heisenberg_spatial}.

\subsubsection{System block and bath block}

Restricting \eqref{eq:heisenberg_spatial} to the kept modes $\mu\le n$ yields
\begin{equation}
\hat a^{(k+1)}_{m,p}
=
\sum_{m'=1}^n t^{(k)}_{m\leftarrow m'} \hat a^{(k)}_{m',p}
+
\sum_{\mu'>n} t^{(k)}_{m\leftarrow \mu'} \hat a^{(k)}_{\mu',p},
\qquad m\le n.
\label{eq:kept_heisenberg}
\end{equation}
This can be written in compact block form as
\begin{equation}
\hat{\bm a}^{(k+1)}_S
=
T_k\,\hat{\bm a}^{(k)}_S
+
R_k\,\hat{\bm a}^{(k)}_E,
\label{eq:system_block_relation}
\end{equation}
where, in the polarization-independent case,
\begin{equation}
T_k = T_k^{\perp}\otimes I_2,
\qquad
R_k = R_k^{\perp}\otimes I_2.
\label{eq:Tk_tensor}
\end{equation}

More generally, if polarization mixing is introduced by receiver optics or a
per-rail polarization transformation represented by Jones matrices
$J_{k,(m)}\in\mathbb{C}^{2\times 2}$ acting on rail $m$, then the system matrix
elements are
\begin{equation}
(T_k)_{(m,p),(m',p')}
=
J_{k,(m)}(p,p')\,
\left\langle
w_m^{(k+1)},\ \mathcal{U}_{k+1:k} w_{m'}^{(k)}
\right\rangle,
\label{eq:Tk_general}
\end{equation}
which reduces to \eqref{eq:Tk_tensor} when $J_{k,(m)}=I_2$.

\subsubsection{Contraction identity}

Unitarity of the underlying mode transformation implies
\begin{equation}
T_k T_k^\dagger + R_k R_k^\dagger = I_{2n},
\label{eq:contraction_2n}
\end{equation}
which is the polarization-resolved analogue of \eqref{eq:contraction_spatial}. It
encodes that leakage to bath modes renders the effective system evolution
non-unitary, thereby necessitating a Kraus (open-system) description at the
logical level.

The subsequent subsections lift \eqref{eq:system_block_relation} to the logical
$n$-qubit encoding and derive explicit Kraus operators for both distinguishable
and indistinguishable photon regimes.

\subsection{Quantization, System--Bath Factorization, and Reduced Dynamics}

We now lift the polarization-resolved single-particle transformation
\eqref{eq:system_block_relation} to the quantum many-body level and derive the
reduced system dynamics by tracing out the bath modes.

\subsubsection{Bosonic Fock space and factorization}

Let $\mathcal{H}^{(k)}_{\mathrm{kept}} \simeq \mathbb{C}^{2n}$ denote the
single-particle kept subspace (spatial rails $\times$ polarization) at plane
$z_k$, and let $\mathcal{H}^{(k)}_{\mathrm{bath}}$ denote its orthogonal
complement. The full single-particle space decomposes as
\begin{equation}
\mathcal{H} = \mathcal{H}^{(k)}_{\mathrm{kept}} \oplus \mathcal{H}^{(k)}_{\mathrm{bath}}.
\end{equation}
Accordingly, the bosonic Fock space factorizes as
\begin{equation}
\mathcal{F}(\mathcal{H})
\simeq
\mathcal{F}(\mathcal{H}^{(k)}_{\mathrm{kept}})
\otimes
\mathcal{F}(\mathcal{H}^{(k)}_{\mathrm{bath}}).
\label{eq:fock_factorization}
\end{equation}

The passive unitary $\hat U_{k+1:k}$ induced by split-step propagation acts on
$\mathcal{F}(\mathcal{H})$ and implements the Heisenberg transformation
\eqref{eq:system_block_relation}. We assume that the bath modes defined at plane
$z_k$ are initially in the vacuum state, which is consistent with performing the
system--bath split by modal decomposition at each plane.

\subsubsection{Range-step reduced map}

Let $\rho_S(z_k)$ be a density operator on
$\mathcal{F}(\mathcal{H}^{(k)}_{\mathrm{kept}})$. The reduced system state at
plane $z_{k+1}$ is
\begin{equation}
\rho_S(z_{k+1})
=
\Tr_E\!\left[
\hat U_{k+1:k}
\bigl(\rho_S(z_k)\otimes\ket{\mathrm{vac}}\bra{\mathrm{vac}}_E\bigr)
\hat U_{k+1:k}^\dagger
\right].
\label{eq:reduced_map_fock}
\end{equation}
Equation \eqref{eq:reduced_map_fock} defines a completely positive trace-preserving
(CPTP) map on the kept-mode Fock space. The remaining task is to restrict this map
to the logical polarization encoding and obtain explicit Kraus operators.

\subsection{Logical Encoding and Erasure-Augmented System Space}

\subsubsection{Logical subspace}

At the transmitter plane, the logical encoding consists of exactly one photon in
each spatial rail, with polarization encoding the qubit value. The logical
subspace is
\begin{equation}
\mathcal{H}_{\mathrm{log}}
=
\mathrm{span}\Bigl\{
\hat a_{1,p_1}^\dagger\cdots \hat a_{n,p_n}^\dagger\ket{\mathrm{vac}}
:\; p_m\in\{H,V\}
\Bigr\}
\simeq (\mathbb{C}^2)^{\otimes n}.
\label{eq:logical_subspace}
\end{equation}

Because \eqref{eq:system_block_relation} includes leakage into bath modes and
mixing among spatial rails, $\mathcal{H}_{\mathrm{log}}$ is not invariant under
the reduced evolution \eqref{eq:reduced_map_fock}.

\subsubsection{Erasure-augmented encoding}

To obtain a fixed-dimensional CPTP description at the logical level, we extend
each rail with an explicit erasure flag. For each spatial rail $m$, define the
local system space
\begin{equation}
\mathcal{H}_m = \mathbb{C}^2 \oplus \mathrm{span}\{\ket{\emptyset}\},
\end{equation}
where $\ket{\emptyset}$ represents the event that the photon initially associated
with rail $m$ has leaked from the kept subspace or participates in a non-logical
occupation pattern\footnote{\textcolor{black}{$|\varnothing\rangle$ represents a flagged erasure event, corresponding to any outcome in which the photon initially associated with rail $m$ cannot be assigned to the logical mode at the receiver. Physically, this includes scattering into spatial modes outside the finite set collected by the receiver (e.g., power leaving the aperture or failing mode projection), as well as non-logical occupation patterns such as multiple photons exiting in the same spatial rail that cannot be mode resolved.
}}.

The full erasure-augmented system space is
\begin{equation}
\mathcal{H}_{\mathrm{sys}} = \bigotimes_{m=1}^n \mathcal{H}_m,
\qquad \dim \mathcal{H}_{\mathrm{sys}} = 3^n.
\label{eq:erasure_space}
\end{equation}

Let $\Pi_{\mathrm{sys}}$ denote the isometry embedding
$\mathcal{H}_{\mathrm{sys}}$ into the kept-mode Fock space, which maps
$\ket{H}_m,\ket{V}_m$ to single-photon states in modes $(m,H)$ and $(m,V)$,
respectively, and maps $\ket{\emptyset}_m$ to the vacuum in rail $m$. This
embedding allows all non-logical outcomes of the physical evolution to be mapped
to orthogonal erasure states.

\subsection{Distinguishable-Photon Regime: Explicit Kraus Operators}

We first consider the regime in which the $n$ photons are distinguishable by an
additional degree of freedom (e.g., time bin or frequency tag), so that
multi-photon interference is suppressed.

\subsubsection{Single-rail polarization map}

In this regime, each photon can be treated independently. For a given rail $m$,
extract from the system matrix $T_k$ the $2\times 2$ polarization block
\begin{equation}
B_{m,k}
=
\begin{pmatrix}
(T_k)_{(m,H),(m,H)} & (T_k)_{(m,H),(m,V)} \\
(T_k)_{(m,V),(m,H)} & (T_k)_{(m,V),(m,V)}
\end{pmatrix}.
\label{eq:Bm_def}
\end{equation}
From the contraction identity \eqref{eq:contraction_2n}, it follows that
\begin{equation}
B_{m,k}^\dagger B_{m,k} \le I_2.
\end{equation}
Define the complementary positive operator
\begin{equation}
C_{m,k} = \sqrt{I_2 - B_{m,k}^\dagger B_{m,k}}.
\label{eq:Cm_def}
\end{equation}
Under the wrong-port-is-not-an-error convention,  we interpret $B_{m,k}$ as the effective single-rail\footnote{We use the term \emph{rail} to denote a logical spatial-mode channel associated with a selected orthogonal spatial degree of freedom and its polarization degree of freedom. Each rail carries exactly one photon at the transmitter, with polarization encoding a qubit. Depending on the physical realization, different rails may be implemented by orthogonal co-propagating modes, by beams with partial spatial separation, or more generally by transmitter–receiver mode pairs chosen to define the kept system subspace. Atmospheric turbulence can then induce both coupling between rails and leakage into unobserved spatial degrees of freedom. When the rails experience a common or partially common turbulent medium, these effects appear as correlated fading, inter-rail crosstalk, and spillover; when they are sufficiently separated relative to the turbulence coherence scale, the channel responses become progressively less correlated.} 
 polarization block after rail relabeling. 
\subsubsection{Per-rail Kraus operators}

Define Kraus operators acting from $\mathbb{C}^2$ to $\mathbb{C}^2\oplus\ket{\emptyset}$
by
\begin{align}
K^{(m)}_{0,k}
&=
\begin{pmatrix}
B_{m,k} \\
0
\end{pmatrix},
\\
K^{(m)}_{1,k}
&=
\begin{pmatrix}
0 \\
(C_{m,k})_{1,:}
\end{pmatrix},
\qquad
K^{(m)}_{2,k}
=
\begin{pmatrix}
0 \\
(C_{m,k})_{2,:}
\end{pmatrix}.
\label{eq:single_rail_kraus}
\end{align}
These satisfy the completeness relation
\begin{equation}
\sum_{\alpha=0}^2 K^{(m)\dagger}_{\alpha,k} K^{(m)}_{\alpha,k} = I_2.
\end{equation}

\subsubsection{Multi-qubit Kraus representation}

The $n$-qubit reduced map on $\mathcal{H}_{\mathrm{sys}}$ for one range step is
\begin{equation}
\rho_S(z_{k+1})
=
\sum_{\bm{\alpha}\in\{0,1,2\}^n}
\left(
\bigotimes_{m=1}^n K^{(m)}_{\alpha_m,k}
\right)
\rho_S(z_k)
\left(
\bigotimes_{m=1}^n K^{(m)}_{\alpha_m,k}
\right)^\dagger.
\label{eq:distinguishable_kraus}
\end{equation}
Although the Kraus operators factorize, the channel is generally correlated
across qubits because all $B_{m,k}$ depend on the same turbulence realization via
the overlap integrals \eqref{eq:t_coeff_explicit}.

\subsection{Indistinguishable-Photon Regime: Emergence of Permanents}

Although the photons are initially prepared in orthogonal spatial modes and are therefore distinguishable at the transmitter, this distinguishability is not necessarily preserved at the receiver. Atmospheric turbulence induces intermodal mixing, and the receiver performs a projection onto a finite set of spatial modes. As a result, the information about the photon’s original spatial-mode label may be partially or completely erased. When no additional degree of freedom (such as time, frequency, or polarization) encodes which-photon information, the photons become effectively indistinguishable at detection. In this regime, the output statistics are governed by bosonic many-body interference, leading to transition amplitudes expressed in terms of matrix permanents.

We now treat the regime in which the $n$ photons are indistinguishable bosons
occupying the same temporal and spectral mode.

\subsubsection{Bosonic lifting of the single-particle map}

The single-particle Heisenberg transformation \eqref{eq:system_block_relation}
implies the creation-operator mapping
\begin{equation}
\hat a_{j,p_j}^\dagger
\mapsto
\sum_{(m,q)\in S}
(T_k)_{(m,q),(j,p_j)}\,\hat a_{m,q}^\dagger
+
\sum_{\ell\in E}
(R_k)_{\ell,(j,p_j)}\,\hat b_\ell^\dagger,
\label{eq:creation_map_full}
\end{equation}
where $S$ indexes system output modes and $E$ indexes bath modes.

\subsubsection{Derivation of the permanent}

Let the logical input state be
\begin{equation}
\ket{p}
=
\hat a_{1,p_1}^\dagger\cdots \hat a_{n,p_n}^\dagger\ket{\mathrm{vac}}.
\end{equation}
Expanding the product of \eqref{eq:creation_map_full} for $j=1,\dots,n$ yields a
coherent sum over all assignments of input photons to output modes. If we
postselect on logical output states
\begin{equation}
\ket{q}
=
\hat a_{1,q_1}^\dagger\cdots \hat a_{n,q_n}^\dagger\ket{\mathrm{vac}},
\end{equation}
with exactly one photon per rail, then only terms corresponding to permutations
$\pi\in S_n$ contribute. Using bosonic commutation relations, all such terms add
with the same sign, giving
\begin{equation}
\braket{q|U_k|p}
=
\sum_{\pi\in S_n}
\prod_{i=1}^n
(T_k)^*_{(i,q_i),(\pi(i),p_{\pi(i)})}
=
\mathrm{perm}\!\bigl(M_k(q|p)\bigr),
\label{eq:permanent_result}
\end{equation}
where the $n\times n$ matrix $M_k(q|p)$ has entries
\begin{equation}
[M_k(q|p)]_{ij} = (T_k)^*_{(i,q_i),(j,p_j)}.
\end{equation}

\subsubsection{Logical Kraus operators}

Including all bath outcomes and mapping non-logical events to erasure states via
$\Pi_{\mathrm{sys}}$, the reduced evolution on $\mathcal{H}_{\mathrm{sys}}$ admits
a Kraus representation
\begin{equation}
\rho_S(z_{k+1})
=
\sum_\beta
K_{\beta,k}\,\rho_S(z_k)\,K_{\beta,k}^\dagger,
\end{equation}
with
\begin{equation}
K_{\beta,k}
=
\Pi_{\mathrm{sys}}^\dagger
\bra{e_\beta}
\hat U_{k+1:k}
\bigl(\Pi_{\mathrm{sys}}\otimes\ket{\mathrm{vac}}_E\bigr).
\label{eq:indist_kraus}
\end{equation}
In the indistinguishable-photon regime, the logical blocks of the Kraus operators
contain coherent sums of permanents whose entries are given explicitly by the
overlap integrals \eqref{eq:t_coeff_explicit}, thereby completing the derivation
from turbulence parameters to logical $n$-qubit noise. We should highlight that inter-system crosstalk is not fundamentally noise if the spatial-mode coupling were deterministic, known, and lossless. It could in principle be compensated by appropriate pre- or post-processing. In free-space propagation through atmospheric turbulence, however, the mode coupling is induced by a random, three-dimensional refractive-index field and is only partially observable at the receiver due to finite mode collection. Turbulence renders the intermodal mixing effectively unknown and induces scattering into spatial modes outside the observed set, leading to irreversible loss. While adaptive optics can mitigate phase distortions and reduce modal distortion, it cannot fully suppress scintillation or recover energy that has leaked outside the collected mode set. As a result, the combination of partial observability (finite mode projection) and stochastic propagation leads, after tracing over unobserved degrees of freedom and averaging over turbulence realizations, to intrinsically non-unitary reduced dynamics on the logical polarization degrees of freedom, which are naturally described as noise.

\subsection{Full-Range Evolution: Composition of Range-Step Maps}

The previous subsections derived the reduced dynamics for a \emph{single} range
step $[z_k,z_{k+1}]$. We now construct the \emph{full end-to-end Quantum MIMO
channel} describing propagation from the transmitter plane $z=0$ to the receiver
plane $z=L$ by explicit composition of these range-step maps.

\subsubsection{Concatenation at the level of single-particle propagation}

At the single-photon level, the transverse propagation operator from $z=0$ to
$z=L$ is
\begin{equation}
\mathcal{U}_{L:0}
=
\mathcal{U}_{K:K-1}\circ \mathcal{U}_{K-1:K-2}\circ \cdots \circ \mathcal{U}_{1:0},
\label{eq:U_full}
\end{equation}
where each factor $\mathcal{U}_{k+1:k}=\mathcal{P}_{\Delta z}\circ\mathcal{S}_{\phi_k}$
depends explicitly on the correlated phase screen $\phi_k(\mathbf{r})$ generated
from the same three-dimensional turbulence realization.

Let $\{w_m^{(0)}\}$ denote the launched spatial modes at $z=0$, and
$\{w_m^{(K)}\}$ denote the detected spatial modes at $z=L$. The full-range spatial
overlap coefficients are
\begin{equation}
t^{(L)}_{m\leftarrow m'}
=
\left\langle
w_m^{(K)},\ \mathcal{U}_{L:0} w_{m'}^{(0)}
\right\rangle.
\label{eq:t_full}
\end{equation}
These coefficients encode \emph{all} cumulative diffraction, turbulence-induced
scattering, and mode mismatch along the path.

\subsubsection{Full-range system matrix}

At the operator level, repeated application of the Heisenberg relations
\eqref{eq:system_block_relation} yields
\begin{equation}
\hat{\bm a}^{(K)}_S
=
T_{L:0}\,\hat{\bm a}^{(0)}_S
+
\sum_{j=0}^{K-1}
\left(
T_{K:j+1} R_j
\right)
\hat{\bm a}^{(j)}_E,
\label{eq:system_block_full}
\end{equation}
where
\begin{equation}
T_{L:0} := T_{K-1} T_{K-2}\cdots T_0
\label{eq:T_full}
\end{equation}
is the \emph{full-range system matrix}, and
\begin{equation}
T_{K:j+1} := T_{K-1}T_{K-2}\cdots T_{j+1}
\end{equation}
with the convention $T_{K:K}=I$.

Equation \eqref{eq:system_block_full} shows explicitly that the final system
operators depend not only on the initial system operators but also on bath
operators injected at all intermediate planes. This is the microscopic origin of
loss, erasure, and non-unitarity in the reduced logical channel.

\subsubsection{Full-range reduced map on the kept-mode Fock space}

Let $\rho_S(0)$ be the initial system density operator on
$\mathcal{F}(\mathcal{H}^{(0)}_{\mathrm{kept}})$. The reduced system state at the
receiver plane is
\begin{equation}
\rho_S(L)
=
\Tr_E\!\left[
\hat U_{L:0}
\bigl(\rho_S(0)\otimes\ket{\mathrm{vac}}\bra{\mathrm{vac}}_E\bigr)
\hat U_{L:0}^\dagger
\right],
\label{eq:full_reduced_map}
\end{equation}
where $\hat U_{L:0}=\hat U_{K:K-1}\cdots\hat U_{1:0}$ is the full Fock-space unitary.

Equation \eqref{eq:full_reduced_map} defines a CPTP map $\Lambda_{L:0}$ on the
kept-mode Fock space. Importantly, because the same turbulence realization
generates all phase screens $\{\phi_k\}$, the sequence of maps
$\{\Lambda_k\}_{k=0}^{K-1}$ is \emph{correlated} and the reduced evolution is
generally non-Markovian with respect to the range parameter.

\subsubsection{Full-range logical channel: distinguishable photons}

In the distinguishable-photon regime, each range-step map admits a Kraus
representation \eqref{eq:distinguishable_kraus}. The full-range logical channel is
their ordered composition,
\begin{equation}
\Lambda^{(\mathrm{dist})}_{L:0}
=
\Lambda^{(\mathrm{dist})}_{K-1}\circ
\Lambda^{(\mathrm{dist})}_{K-2}\circ \cdots \circ
\Lambda^{(\mathrm{dist})}_{0}.
\label{eq:full_dist_channel}
\end{equation}
Equivalently, $\Lambda^{(\mathrm{dist})}_{L:0}$ admits a Kraus representation whose
Kraus operators are all products
\begin{equation}
\tilde K_{\bm{\alpha}}
=
\prod_{k=0}^{K-1}
\left(
\bigotimes_{m=1}^n K^{(m)}_{\alpha_{m,k},k}
\right),
\label{eq:full_dist_kraus}
\end{equation}
with multi-indices $\bm{\alpha}=\{\alpha_{m,k}\}_{m,k}$. All dependence on the
turbulence parameters $(C_n^2,L_0,l_0)$ enters through the product
$T_{L:0}=T_{K-1}\cdots T_0$.

\subsubsection{Full-range logical channel: indistinguishable photons}

In the indistinguishable-photon regime, the relevant object is the full-range
system matrix $T_{L:0}$. Repeating the derivation leading to
\eqref{eq:permanent_result}, but with $T_k$ replaced by $T_{L:0}$, yields the
end-to-end logical transition amplitudes
\begin{equation}
\braket{q|U_{L:0}|p}
=
\mathrm{perm}\!\left(M_{L:0}(q|p)\right),
\label{eq:full_perm}
\end{equation}
where
\begin{equation}
[M_{L:0}(q|p)]_{ij}
=
(T_{L:0})^*_{(i,q_i),(j,p_j)}.
\end{equation}
Thus, the entire propagation through a turbulent path of length $L$ enters the
logical channel \emph{only} through the full-range overlap matrix $T_{L:0}$,
itself determined by the correlated phase screens generated from the turbulence
parameters.

It is important to highlight at this stage that this resulting logical channel is naturally described as a correlated erasure channel, as we will show subsequentely, where turbulence-induced leakage of spatial modes leads to erasure events that are correlated across rails. Conditioned on survival, the polarization degree of freedom is preserved by the atmospheric channel and may additionally experience correlated unitary perturbations due to terminal-induced drifts. Only after conditioning on non-erasure and applying Pauli twirling can the polarization dynamics be approximated by a correlated Pauli channel. 

\subsection{Relation to Classical Fading and Outage Models}
\label{sec:fading_relation}

Classical free-space optical (FSO) and wireless channels are commonly modeled as fading channels, in which the received signal is multiplied by a random complex gain determined by the propagation medium. In the single-input single-output (SISO) case, this is written as
\begin{equation}
y = h x + n,
\end{equation}
where the random coefficient $h$ captures turbulence-induced amplitude and phase fluctuations, and deep fades $|h|^2 \ll 1$ give rise to outage events. For multiple-input multiple-output (MIMO) systems, this generalizes to a random channel matrix $\mathbf{H}$ whose entries may be correlated due to shared propagation through the atmosphere.

The Quantum MIMO free-space optical channel derived in this work constitutes a direct physical generalization of such fading models. Rather than postulating a phenomenological fading coefficient or matrix, the effective channel is determined by the random spatial-mode overlap matrix
\begin{equation}
T_{\perp}(\omega) = \left[ t_{r \leftarrow m}(\omega) \right]_{r,m}
= \left[ \langle w_r, \mathcal{U}_{L:0} u_m \rangle \right]_{r,m},
\end{equation}
which arises from wave-optical propagation through a three-dimensional turbulent refractive-index field. This matrix plays the role of a \emph{fading matrix}, with randomness inherited directly from the underlying turbulence realization $\omega$.

A key distinction from classical fading models is that $T_{\perp}(\omega)$ is generally \emph{subunitary}, reflecting irreversible scattering of optical power into spatial modes outside the receiver's kept subspace. The per-rail survival probability
\begin{equation}
p_{\mathrm{keep},m}(\omega) = \sum_{r=1}^{n} \left| t_{r \leftarrow m}(\omega) \right|^2
\end{equation}
therefore serves as the quantum analogue of an instantaneous channel power gain. Events in which $p_{\mathrm{keep},m}(\omega)$ is small correspond to \emph{deep fades}, in which the photon initially associated with rail $m$ is unlikely to be detected within the receiver mode set. At the logical level, such deep fades naturally manifest as \emph{erasure events}, with erasure probability
\begin{equation}
\epsilon_m(\omega) = 1 - p_{\mathrm{keep},m}(\omega).
\end{equation}

Because all spatial modes propagate through the same turbulent medium, the random variables $\{ \epsilon_m(\omega) \}_{m=1}^{n}$ are generally correlated across rails. This directly parallels correlated fading in classical MIMO channels, where different spatial streams experience statistically dependent channel gains. In the present quantum setting, these correlations appear as \emph{correlated erasures} in the logical $n$-qubit channel, as formalized in Section. \ref{sec:reduction_correlated_erasure}.

Longitudinal correlations in the refractive-index fluctuations further induce correlations across propagation range, giving rise to non-Markovian channel evolution. This corresponds to slow- or block-fading regimes in classical communications, but here emerges from a microscopic three-dimensional turbulence model rather than an imposed statistical assumption.

In summary, the Quantum MIMO channel derived in this work can be viewed as a physically grounded fading channel in which (i) random mode-overlap matrices replace phenomenological fading coefficients, (ii) turbulence-induced leakage corresponds to deep fades that induce erasures, and (iii) shared propagation through a single random medium leads to correlated fading across spatial channels.

\section{Reduction to a Correlated $n$-Qubit Erasure Channel}
\label{sec:reduction_correlated_erasure}

This section formalizes how the quantum MIMO FSO channel derived in
Sec.~IV reduces, at the logical level, to a \emph{correlated $n$-qubit erasure
channel} under physically standard assumptions. The reduction makes explicit
(i) the origin of correlations across spatial rails and (ii) the distinction
between the \emph{extended} (flagged, per-rail) erasure map and the
\emph{non-extended} (coarse-grained, block) erasure map.

\subsection{From Turbulence-Induced Mode Mixing to Per-Rail Survival}
\label{subsec:per_rail_survival}

Fix a propagation distance $L$ and consider $n$ transmitted spatial modes
$\{u_m\}_{m=1}^n$ and $n$ collected (``kept'') receiver modes
$\{w_r\}_{r=1}^n$ (e.g., LG modes). For each turbulence realization $\omega$,
the split-step propagation with random phase screens produces the received field
$\psi_m(\cdot;L,\omega)$ and the kept-set overlap (crosstalk) matrix
\begin{equation}
t_{r\leftarrow m}(\omega)
=
\langle w_r,\psi_m(\cdot;L,\omega)\rangle,
\qquad
T^\perp(\omega)=[t_{r\leftarrow m}(\omega)]_{r,m=1}^n.
\label{eq:Tperp_omega}
\end{equation}
Throughout this reduction we adopt the operational convention used in the main
text: \emph{arrival in a wrong kept port is not an error}. Thus, the relevant
logical impairment due to spatial turbulence is \emph{leakage outside the kept
subspace}.

Accordingly, define the \emph{per-rail kept probability} for input rail $m$ as
the total probability mass remaining within the kept subspace:
\begin{align}
&p_{\mathrm{keep},m}(\omega)
=
\sum_{r=1}^n |t_{r\leftarrow m}(\omega)|^2,
\nonumber\\
&\epsilon_m(\omega)= p_{\mathrm{erase},m}(\omega)=1-p_{\mathrm{keep},m}(\omega).
\label{eq:pkeep_per_rail}
\end{align}
The random vector $\boldsymbol{\epsilon}(\omega)=(\epsilon_1(\omega),\dots,\epsilon_n(\omega))$
encodes the realization-dependent erasure propensities across rails. Since the
same turbulent medium acts jointly on all input modes, the components of
$\boldsymbol{\epsilon}(\omega)$ are generally \emph{statistically dependent}
across $m$; this dependence is the fundamental origin of \emph{correlated
erasures} at the logical level.

We remark that, for isotropic atmospheric turbulence modeled as a scalar refractive-index
fluctuation, the spatial coupling is polarization-independent, and the overall
single-photon map on the kept set takes the form
$T(\omega)=T^\perp(\omega)\otimes I_2$. Hence polarization is preserved
\emph{conditioned on survival}, and turbulence-induced logical errors manifest
primarily as erasures. Polarization noise can be incorporated separately via
mode-dependent or rail-dependent Jones matrices (cf. Sec.~IV), in which case the
logical channel becomes ``correlated erasure + correlated polarization noise''
rather than pure erasure.

\subsection{Extended (Flagged) Erasure Map}
\label{subsec:extended_erasure}

To represent photon loss on individual rails, we use the standard \emph{extended}
erasure embedding. For each rail $m$, the logical polarization qubit Hilbert
space $\mathcal{H}_m\simeq\mathbb{C}^2$ is enlarged to
\begin{equation}
\tilde{\mathcal{H}}_m \;=\; \mathcal{H}_m \oplus \mathrm{span}\{\ket{e}\},
\label{eq:extended_space}
\end{equation}
where $\ket{e}$ is an orthogonal erasure flag state. The single-rail erasure
channel with erasure probability $\epsilon$ is
\begin{equation}
\mathcal{E}_{\epsilon}(\rho)=(1-\epsilon)\rho+\epsilon\,\ket{e}\!\bra{e}.
\label{eq:single_rail_erasure_channel}
\end{equation}

We now restrict attention to the distinguishable-photon regime (i.e., no multi-photon interference), so that conditioned on a fixed turbulence realization $\omega$ the $n$ photons can be treated independently across rails. Indeed, if conditioned on a fixed turbulence realization $\omega$, each photon evolves
independently through the linear optical medium and is erased with probability
$\epsilon_m(\omega)$ on rail $m$, independently across $m$. This is the natural multi-photon extension of the
single-photon leakage model when photons are generated independently and the
dominant impairment is coupling to unobserved spatial modes. Under scalar
turbulence, polarization does not affect the leakage probability, so the
conditional polarization map is the identity on non-erased rails. As a matter of fact, the \emph{conditional} logical channel
given $\omega$ factorizes:
\begin{equation}
\Lambda_{\omega}(\rho)
=
\bigotimes_{m=1}^n \mathcal{E}_{\epsilon_m(\omega)}(\rho),
\label{eq:Lambda_omega_product}
\end{equation}
where $\rho$ acts on $(\mathbb{C}^2)^{\otimes n}$ and the output acts on
$(\mathbb{C}^2\oplus\ket{e})^{\otimes n}$. For each erasure pattern
$\mathbf{s}=(s_1,\dots,s_n)\in\{0,1\}^n$ (with $s_m=1$ indicating that rail $m$ is
erased), the conditional probability of $\mathbf{s}$ is
\begin{equation}
p(\mathbf{s}\mid \omega)
=
\prod_{m=1}^n
\big(\epsilon_m(\omega)\big)^{s_m}
\big(1-\epsilon_m(\omega)\big)^{1-s_m}.
\label{eq:pattern_prob_conditional}
\end{equation}

The physically relevant logical channel is obtained by averaging over turbulence
realizations:
\begin{equation}
\Lambda(\rho)=\mathbb{E}_{\omega}\!\left[\Lambda_{\omega}(\rho)\right].
\label{eq:Lambda_ensemble_def}
\end{equation}
Averaging \eqref{eq:pattern_prob_conditional} over $\omega$ yields the joint
erasure-pattern law
\begin{equation}
p(\mathbf{s})
=
\mathbb{E}_{\omega}\!\left[
\prod_{m=1}^n
\big(\epsilon_m(\omega)\big)^{s_m}
\big(1-\epsilon_m(\omega)\big)^{1-s_m}
\right].
\label{eq:pattern_prob_unconditional}
\end{equation}
In general, the random variables $\{\epsilon_m(\omega)\}_{m=1}^n$ are correlated
across $m$, so \eqref{eq:pattern_prob_unconditional} does not factorize into a
product of marginals. Hence the ensemble-averaged logical channel is a
\emph{correlated $n$-qubit erasure channel}.

\begin{proposition}[Correlated erasure representation]
\label{prop:correlated_erasure_channel}
Under scalar turbulence and the distinguishable-photon (no multi-photon interference) regime described above, the ensemble
logical channel admits the representation
\begin{equation}
\Lambda(\rho)
=
\sum_{\mathbf{s}\in\{0,1\}^n}
p(\mathbf{s})\;
\Big(\rho_{\bar{\mathbf{s}}}\otimes
\ket{e\cdots e}_{\mathbf{s}}\!\bra{e\cdots e}\Big),
\label{eq:correlated_erasure_channel}
\end{equation}
where $\rho_{\bar{\mathbf{s}}}$ denotes the reduced state on the non-erased rails
(i.e., the identity action on polarization on those rails) and the erased rails
are replaced by $\ket{e}$.
\end{proposition}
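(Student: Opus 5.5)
The plan is to expand the conditional product channel \eqref{eq:Lambda_omega_product} term by term, regroup the terms by erasure pattern, and only then average over $\omega$ using linearity. The key structural observation is that each single-rail erasure channel \eqref{eq:single_rail_erasure_channel} is a sum of two completely positive pieces. The first is the ``keep'' map $\rho_m\mapsto(1-\epsilon)\rho_m$, which acts as the identity on polarization; this is where the scalar-turbulence assumption $T(\omega)=T^\perp(\omega)\otimes I_2$ enters. The second is the ``erase'' map $\rho_m\mapsto\epsilon\,\mathrm{Tr}(\rho_m)\ket{e}\!\bra{e}$. Written this way, the single-rail channel is $\mathcal{E}_\epsilon=(1-\epsilon)\,\mathrm{id}+\epsilon\,\mathcal{R}_e$, where $\mathcal{R}_e(X)=\mathrm{Tr}(X)\ket{e}\!\bra{e}$ is the replacement map. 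This form is the correct reading of \eqref{eq:single_rail_erasure_channel} when it is applied to one factor of a multi-rail operator.

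First, for fixed $\omega$, I expand the $n$-fold tensor product $\bigotimes_m[(1-\epsilon_m(\omega))\,\mathrm{id}_m+\epsilon_m(\omega)\,\mathcal{R}_{e,m}]$ by multilinearity into $2^n$ terms indexed by $\mathbf{s}\in\{0,1\}^n$. The term labelled $\mathbf{s}$ carries the scalar coefficient $p(\mathbf{s}\mid\omega)$ from \eqref{eq:pattern_prob_conditional} and the superoperator $\bigotimes_{m:s_m=0}\mathrm{id}_m\otimes\bigotimes_{m:s_m=1}\mathcal{R}_{e,m}$.

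Second, I evaluate this superoperator on a general, possibly entangled, input $\rho$. I check it on product operators $X_1\otimes\cdots\otimes X_n$ and extend by linearity. Applying replacement maps on the set $\{m:s_m=1\}$ is the same as taking the partial trace over those rails and tensoring with $\ket{e\cdots e}\!\bra{e\cdots e}$ on them. The result is therefore $\rho_{\bar{\mathbf{s}}}\otimes\ket{e\cdots e}_{\mathbf{s}}\!\bra{e\cdots e}$, with $\rho_{\bar{\mathbf{s}}}=\mathrm{Tr}_{\mathbf{s}}\rho$, after reordering the tensor factors to the canonical rail ordering.

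Third, I average over $\omega$. This uses linearity of $\mathbb{E}_\omega$ together with the fact that the $\omega$-dependence sits entirely in the scalar coefficients while the superoperators are $\omega$-independent. That gives $\Lambda(\rho)=\sum_{\mathbf{s}}\mathbb{E}_\omega[p(\mathbf{s}\mid\omega)]\,(\cdots)$, and the coefficient is exactly $p(\mathbf{s})$ from \eqref{eq:pattern_prob_unconditional}. To justify exchanging the expectation with the finite sum, I note that $0\le\epsilon_m(\omega)\le1$ by the contraction identity \eqref{eq:contraction_spatial}, so all coefficients are bounded and measurable. As a consistency check, $\sum_{\mathbf{s}}p(\mathbf{s})=1$, so the map is trace preserving, and each term is manifestly completely positive.

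The main obstacle is the second step, not the algebra. It is the justification that the conditional map genuinely factorizes as \eqref{eq:Lambda_omega_product} with identity action on surviving rails. For that I would appeal to the preceding discussion: in the distinguishable regime each photon's amplitude evolves by $T^\perp(\omega)\otimes I_2$; the wrong-port-is-not-an-error convention identifies any kept output port with survival; and conditional independence across rails given $\omega$ follows because distinguishable photons do not interfere. The remaining subtlety is notational. Taken literally, \eqref{eq:single_rail_erasure_channel} omits the trace factor, so I would state explicitly that $\epsilon\ket{e}\!\bra{e}$ means $\epsilon\,\mathrm{Tr}(\cdot)\ket{e}\!\bra{e}$ when acting on a subsystem. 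Without this, the partial-trace structure $\rho_{\bar{\mathbf{s}}}$ in \eqref{eq:correlated_erasure_channel} would not emerge correctly for entangled inputs.
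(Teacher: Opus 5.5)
Your proposal is correct and follows the same route as the paper's proof: you write the conditional product channel \eqref{eq:Lambda_omega_product} as a mixture over erasure patterns $\mathbf{s}$ with weights $p(\mathbf{s}\mid\omega)$, each term acting as the identity on surviving rails and as replacement by $\ket{e}$ on erased ones, and then pass $\mathbb{E}_\omega$ through the finite sum to get $p(\mathbf{s})$. Your explicit reading of $\mathcal{E}_\epsilon$ as $(1-\epsilon)\,\mathrm{id}+\epsilon\,\mathrm{Tr}(\cdot)\ket{e}\!\bra{e}$, which makes $\rho_{\bar{\mathbf{s}}}=\mathrm{Tr}_{\mathbf{s}}\rho$ come out correctly for entangled inputs, is a worthwhile clarification that the paper leaves implicit; one small correction is that for the full-range $T^\perp(\omega)$, the bound $0\le\epsilon_m(\omega)\le 1$ follows from unitarity of $\mathcal{U}_{L:0}$ plus Bessel's inequality, not from the per-slab identity \eqref{eq:contraction_spatial}.
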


\begin{proof}
Conditioned on $\omega$, the product form \eqref{eq:Lambda_omega_product} is a
mixture over patterns $\mathbf{s}$ with weights
$p(\mathbf{s}\mid\omega)$ given by \eqref{eq:pattern_prob_conditional}, where the
map acts as the identity on non-erased rails and outputs $\ket{e}$ on erased
rails. Taking the expectation over $\omega$ yields the same mixture with weights
$p(\mathbf{s})=\mathbb{E}_{\omega}[p(\mathbf{s}\mid\omega)]$, which is exactly
\eqref{eq:pattern_prob_unconditional}, giving \eqref{eq:correlated_erasure_channel}.
\end{proof}

\subsection{Non-Extended (Coarse-Grained) Erasure Map}
\label{subsec:nonextended_erasure}

In some protocols one does not retain the full erasure pattern $\mathbf{s}$ but
only a \emph{block-level} success/failure flag. Define the block success event
as ``no rail is erased,'' i.e. $\mathbf{s}=\mathbf{0}$. Let
\begin{equation}
p_{\mathrm{succ}}= p(\mathbf{0})
=\mathbb{E}_{\omega}\!\left[\prod_{m=1}^n\big(1-\epsilon_m(\omega)\big)\right].
\label{eq:psucc_def}
\end{equation}
The corresponding coarse-grained (non-extended) erasure map is
\begin{equation}
\Lambda_{\mathrm{coarse}}(\rho)
=
p_{\mathrm{succ}}\,\rho
+\big(1-p_{\mathrm{succ}}\big)\ket{E}\!\bra{E},
\label{eq:coarse_erasure_channel}
\end{equation}
where $\ket{E}$ is a single global erasure flag orthogonal to the logical output
space. Notably, \eqref{eq:coarse_erasure_channel} discards information about
\emph{which rails} were erased and therefore cannot capture partial-erasure
events (e.g., losing one photon but retaining others), which are naturally
represented by the extended model \eqref{eq:correlated_erasure_channel}.

\subsection{Estimating the Correlated Erasure Law from Optical Simulations}
\label{subsec:estimating_ps}

The reduction \eqref{eq:pattern_prob_unconditional}--\eqref{eq:correlated_erasure_channel}
provides a direct computational pathway from optical simulation outputs to the
correlated erasure channel parameters. Given $N_{\mathrm{MC}}$ turbulence
realizations $\{\omega_j\}_{j=1}^{N_{\mathrm{MC}}}$, compute $T^\perp(\omega_j)$
and $\epsilon_m(\omega_j)$ via \eqref{eq:pkeep_per_rail}. Then the joint erasure
law can be estimated for any pattern $\mathbf{s}\in\{0,1\}^n$ by
\begin{equation}
\hat p(\mathbf{s})
=
\frac{1}{N_{\mathrm{MC}}}
\sum_{j=1}^{N_{\mathrm{MC}}}
\prod_{m=1}^n
\big(\epsilon_m(\omega_j)\big)^{s_m}
\big(1-\epsilon_m(\omega_j)\big)^{1-s_m}.
\label{eq:p_hat_pattern}
\end{equation}
In particular, the block success probability $p_{\mathrm{succ}}$ in
\eqref{eq:psucc_def} is estimated by
\begin{equation}
\hat p_{\mathrm{succ}}
=
\frac{1}{N_{\mathrm{MC}}}
\sum_{j=1}^{N_{\mathrm{MC}}}
\prod_{m=1}^n\big(1-\epsilon_m(\omega_j)\big).
\label{eq:p_succ_hat}
\end{equation}

To quantify correlated erasures, one may form Bernoulli erasure indicators
$L_m\in\{0,1\}$ per rail by sampling $L_m\sim\mathrm{Bernoulli}(\epsilon_m(\omega))$
for each realization. Pairwise dependence can then be summarized by
$\mathrm{Corr}(L_m,L_{m'})$. These
statistics are directly computable from the Monte Carlo ensemble and provide
compact KPIs for characterizing the correlated erasure structure implied by the
optical channel.

We highlight that, the reduction to a pure correlated erasure channel is exact under:
(i) the wrong-port-is-not-an-error convention (spatial mixing within the kept set
is treated as harmless or is corrected by mode sorting) and (ii) scalar turbulence
($T^\perp\otimes I_2$). If polarization-dependent effects are introduced (e.g., mode-dependent Jones
matrices or space--polarization coupling in the receiver), then the conditional
non-erasure map on polarization is no longer the identity, and the logical
channel becomes a composition of correlated erasure with correlated polarization
noise.


\subsection{Simulation Parameters and Methods}
\label{subsec:simulation_parameters}

\begin{figure}[t!]
    \centering
    \includegraphics[width = 0.475\textwidth]{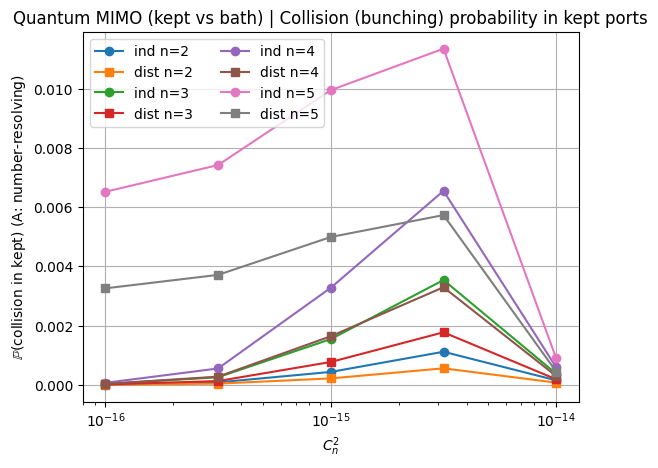}
    \caption{%
    \textbf{Multi-photon collision (bosonic bunching) probability in the kept ports.}
    Probability that two or more photons exit in the same kept spatial mode,
    conditioned on number-resolving detection, shown as a function of turbulence
    strength $C_n^2$ for multiplexing orders $n=2$–$5$.
    Results are shown for both indistinguishable (bosonic) and distinguishable
    photons.
    Bosonic enhancement of collision events is evident at intermediate turbulence,
    while strong turbulence suppresses collisions due to dominant erasure.}

\color{black}
    \label{fig:QMIMO1}
\end{figure}

\begin{figure}[t!]
    \centering
    \includegraphics[width = 0.475\textwidth]{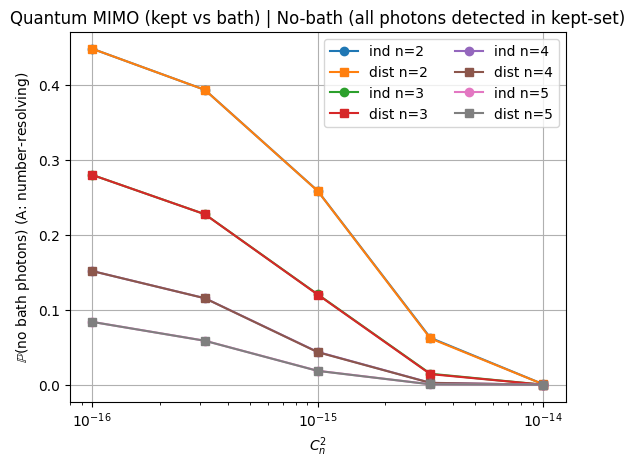}
    \caption{%
    \textbf{No-bath postselected detection probability.}
    Probability that all $n$ photons are detected within the kept receiver mode
    set (i.e., no erasure), assuming perfect postselection and no access to bath
    modes.
    Results are plotted versus turbulence strength $C_n^2$ for
    multiplexing orders $n=2$–$5$.
    Increasing turbulence and higher spatial multiplexing both reduce the
    probability of full retention, illustrating the tradeoff between mode count
    and channel robustness.}
    \label{fig:QMIMO2}
\end{figure}

\begin{figure}[t!]
    \centering
    \includegraphics[width = 0.475\textwidth]{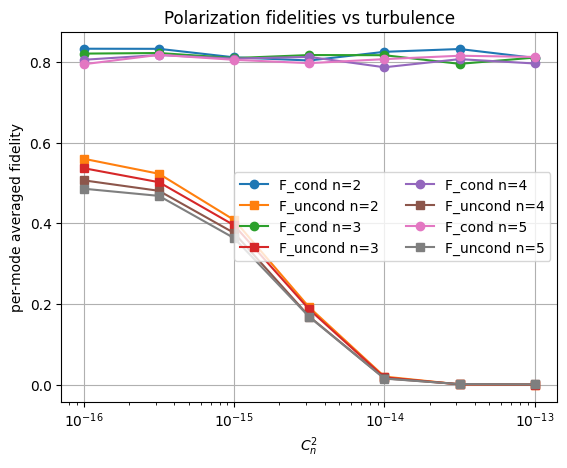}
    \caption{%
    \textbf{Average per-mode polarization fidelity versus turbulence.}
    Mean polarization fidelity of the logical channel as a function of turbulence
    strength $C_n^2$, shown for conditional (postselected, no-erasure) and
    unconditional (erasure-including) channels.
    Conditional fidelities remain high and largely independent of $n$,
    confirming polarization preservation by turbulence,
    while unconditional fidelities rapidly degrade as erasure dominates.}
    \label{fig:QMIMO3}
\end{figure}

\begin{figure}[t!]
    \centering
    \includegraphics[width = 0.475\textwidth]{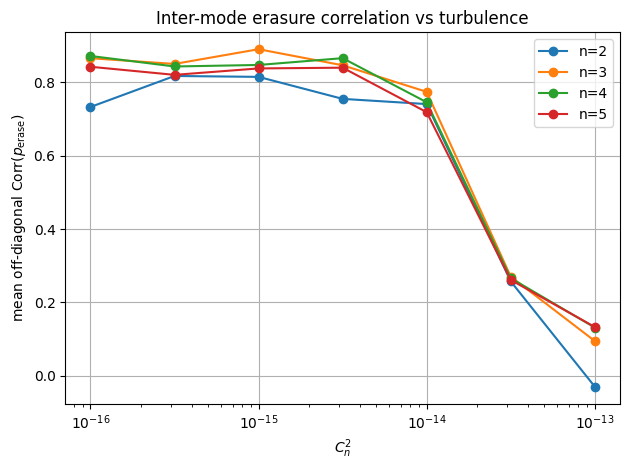}
    \caption{\textbf{Inter-Mode Loss Correlations.} Mean off-diagonal inter-mode erasure correlation as a function of the
turbulence strength $C_n^2$ for $n=2$–$5$ spatially multiplexed modes. The
correlation is computed from the joint statistics of per-rail erasure events and
quantifies the degree to which turbulence-induced loss acts as a common-mode
process across spatial channels. Strong correlations are observed in the weak
and moderate turbulence regimes, reflecting the shared propagation through a
single random medium, while correlations decrease at strong turbulence as
erasure probabilities approach unity and the loss process becomes saturated.}
    \label{fig:QMIMO4}
\end{figure}

The numerical results shown in Figs.~1–4 are obtained using a wave-optical,
split-step propagation model that directly resolves the evolution of spatial
modes through atmospheric turbulence. Polarization-encoded qubits are
multiplexed over $n=2$–$5$ co-propagating Laguerre–Gaussian modes with radial
index $p=0$ and azimuthal indices symmetrically chosen around $\ell=0$. The
optical wavelength is fixed at $\lambda=1550\,\mathrm{nm}$, the propagation
distance is $L=10\,\mathrm{km}$, and the beam waist at the transmitter is
$w_0=3\,\mathrm{cm}$. The transverse optical field is sampled on a
$128\times128$ Cartesian grid with spatial resolution
$\Delta x=2.5\,\mathrm{mm}$, which provides sufficient numerical support for
all considered spatial modes and ensures convergence of the Fresnel
propagation.

Atmospheric turbulence is modeled using a three-dimensional von K\'arm\'an
refractive-index spectrum with outer scale $L_0=30\,\mathrm{m}$ and inner scale
$l_0=5\,\mathrm{mm}$. The propagation path is divided into $K=40$ longitudinal
segments, each represented by a random phase screen. Longitudinal correlations
between successive screens are incorporated via an autoregressive AR(1)
process with correlation coefficient $\rho_z=0.9$, capturing the effects of
extended turbulence volumes rather than independent thin screens. The
turbulence strength is swept over
$C_n^2\in[10^{-16},10^{-13}]\,\mathrm{m}^{-2/3}$, spanning weak to strong
turbulence regimes relevant for kilometer-scale free-space links.  \color{black} Ensemble
statistics are obtained by averaging over $N_{\mathrm{MC}}=200$
independent turbulence realizations at each value of $C_n^2$. 
\color{black} 
While designing the propagation setup, we take the constraints in Ref. \cite{Rao:08}. All the constraints are satisfied in this study. 
\color{black}

For each realization, all input spatial modes are propagated independently
through the same sequence of phase screens using an angular-spectrum
(paraxial) split-step method. At the receiver plane, the propagated fields are
projected onto the finite set of kept spatial modes, yielding the intermodal
crosstalk matrix $T^\perp$. Coupling outside this basis is treated as leakage
to bath modes and constitutes erasure, while arrival in an incorrect kept port
is not regarded as an error. Distinguishable-photon statistics are computed
from the resulting single-photon coupling probabilities, whereas
indistinguishable-photon results include bosonic interference effects and are
evaluated using number-resolving detection statistics.

\textcolor{black}{Figure~\ref{fig:QMIMO1} shows the probability of collision (bunching) events within
the finite set of kept spatial ports. At weak turbulence, collisions are rare due
to near-ideal mode preservation. As turbulence increases to intermediate levels,
intermodal mixing within the kept subspace becomes significant while photon
survival remains appreciable, leading to a pronounced enhancement of collision
probability for indistinguishable photons due to bosonic many-body interference.
At strong turbulence, however, leakage into unobserved bath modes dominates and
the probability that multiple photons simultaneously remain within the kept
subspace rapidly vanishes. As a result, collision events are strongly suppressed,
producing the observed drop-off in both distinguishable and indistinguishable
cases.} Figure~\ref{fig:QMIMO2} shows the
probability that all photons remain within the kept set, demonstrating a
rapid, monotonic decay with increasing $C_n^2$ and multiplexing order $n$, and
the overlap of the distinguishable and indistinguishable
cases. Figure~\ref{fig:QMIMO3} reports per-mode polarization
fidelities, where conditional fidelities remain high across turbulence
strengths, consistent with scalar atmospheric propagation, while unconditional
fidelities decrease sharply as erasures become dominant. Finally,
Figure~\ref{fig:QMIMO4} presents the mean off-diagonal inter-mode erasure correlation as a
function of turbulence strength, revealing strong common-mode loss correlations
at weak-to-moderate turbulence and their gradual reduction in the strong
turbulence regime due to saturation of leakage. For each turbulence realization
$\omega$, spatial-mode projection induces a set of per-rail erasure probabilities
$\{\epsilon_m(\omega)\}_{m=1}^n$, defined via the leakage outside the kept spatial
subspace. Conditioned on a fixed realization, erasure events on different rails
are independent, but the shared turbulent medium renders the random variables
$\epsilon_m(\omega)$ statistically dependent across $m$. As a result, the
ensemble-averaged logical channel is characterized by a non-factorizable joint
erasure distribution $p(\mathbf{s})$, giving rise to correlated erasure events at
the logical level.

The correlation plotted in Fig.~4 quantifies this dependence by measuring the
mean off-diagonal correlation of erasure indicators across rails. In the weak-
and moderate-turbulence regimes, large-scale refractive-index fluctuations
dominate the propagation, producing common-mode wavefront distortions that
simultaneously affect all spatial modes. Consequently, fluctuations in the
per-rail erasure probabilities are strongly correlated, and erasure events tend
to occur jointly across multiple rails. In contrast, as the turbulence strength
increases, leakage into bath modes becomes nearly deterministic, with
$\epsilon_m(\omega)\to 1$ for all $m$. In this strong-turbulence regime, the
variance of the erasure indicators collapses, suppressing measurable
correlations despite the continued presence of a shared medium. The decay of
inter-mode erasure correlation observed in Fig.~4 therefore reflects saturation
of loss rather than a reduction in physical coupling, and is a generic feature
of correlated erasure channels derived from strongly lossy propagation.

\section{Conclusions}
We have developed a first-principles framework for modeling Quantum MIMO channels arising in spatially multiplexed free-space optical links. Starting from wave-optical propagation through three-dimensional atmospheric turbulence, the model explicitly connects physical effects—intermodal coupling, finite-aperture detection, and spatial-mode projection—to the resulting logical quantum channel. This approach provides a direct bridge between  optical propagation and finite-dimensional quantum information models.
By distinguishing between distinguishable and indistinguishable photon regimes, we showed that spatial multiplexing can induce fundamentally different noise structures, ranging from single-photon Kraus maps to intrinsically many-body interference effects governed by matrix permanents. To obtain a completely positive and trace-preserving description on a fixed logical space, we introduced an erasure-extended encoding that captures turbulence-induced leakage and photon loss while preserving the physical interpretation of spatial mode mixing.
Within this framework, we demonstrated that, under physically standard assumptions, the dominant logical impairment in atmospheric Quantum MIMO links is naturally described as a correlated $n$-qubit erasure channel, with correlations arising from the shared turbulent medium. Correlated Pauli noise models emerge only as limiting or approximate descriptions, clarifying their domain of validity for free-space quantum communication. The resulting formulation enables systematic investigation of correlated errors, partial erasures, and their impact on multi-qubit protocols.
The framework developed here establishes a foundation for analyzing quantum error mitigation and coding strategies tailored to spatially multiplexed free-space links \cite{koudia2025crosstalk,koudia2023quantum}. Extensions to polarization-dependent effects, adaptive mode control, and experimental validation of correlated erasure statistics constitute promising directions for future work.

\bibliographystyle{IEEEtran}
\bibliography{ref}

@article{koudia2019superposition,
  title={Superposition of causal orders for quantum discrimination of quantum processes},
  author={Koudia, Seid and Gharbi, Abdelhakim},
  journal={International Journal of Quantum Information},
  volume={17},
  number={07},
  pages={1950055},
  year={2019},
  publisher={World Scientific}
}

@ARTICLE{11317988,
  author={Koudia, Seid and Oleynik, Leonardo and Bayraktar, Mert and Rehman, Junaid Ur and Chatzinotas, Symeon},
  journal={IEEE Communications Surveys \& Tutorials}, 
  title={Physical Layer Aspects of Quantum Communications: A Survey}, 
  year={2025},
  volume={},
  number={},
  pages={1-1},
  doi={10.1109/COMST.2025.3647980}}

@article{koudia2025spatial,
  title={Spatial-mode diversity and multiplexing for continuous variables quantum communications},
  author={Koudia, Seid and Oleynik, Leonardo and Ur Rehman, Junaid and Chatzinotas, Symeon},
  journal={Communications Physics},
  volume={8},
  number={1},
  pages={351},
  year={2025},
  publisher={Nature Publishing Group UK London}
}

@article{ur2025mimo,
  title={MIMO Quantum Communication in Correlated Pure-Loss Channels},
  author={ur Rehman, Junaid and Rizvi, Syed Muhammad Abuzar and Koudia, Seid and Chatzinotas, Symeon and Shin, Hyundong},
  journal={IEEE Communications Letters},
  year={2025},
  publisher={IEEE}
}

@article{pirandola2021limits,
  title={Limits and security of free-space quantum communications},
  author={Pirandola, Stefano},
  journal={Physical Review Research},
  volume={3},
  number={1},
  pages={013279},
  year={2021},
  publisher={APS}
}

@book{AndrewsPhillips200,
  author    = {L. C. Andrews and R. L. Phillips},
  title     = {Laser Beam Propagation through Random Media},
  publisher = {SPIE Press},
  edition   = {2},
  year      = {2005},
  chapter   = {3},
  doi       = {10.1117/3.547620}
}

@book{GoodmanFourierOptics,
  author    = {Joseph W. Goodman},
  title     = {Introduction to Fourier Optics},
  publisher = {W. H. Freeman},
  edition   = {4},
  year      = {2017},
  chapter   = {4}
}

@article{Rao:08,
author = {Ruizhong Rao},
journal = {Appl. Opt.},
number = {2},
pages = {269--276},
publisher = {Optica Publishing Group},
title = {Statistics of the fractal structure and phase singularity of a plane light wave            propagation in atmospheric turbulence},
volume = {47},
month = {Jan},
year = {2008},
url = {https://opg.optica.org/ao/abstract.cfm?URI=ao-47-2-269},
doi = {10.1364/AO.47.000269},
}

@article{fabre2020modes,
  title={Modes and states in quantum optics},
  author={Fabre, Claude and Treps, Nicolas},
  journal={Reviews of Modern Physics},
  volume={92},
  number={3},
  pages={035005},
  year={2020},
  publisher={APS}
}

@article{koudia2023quantum,
  title={The quantum internet: an efficient stabilizer states distribution scheme},
  author={Koudia, Seid},
  journal={Physica Scripta},
  volume={99},
  number={1},
  pages={015115},
  year={2023},
  publisher={IOP Publishing}
}

@article{junaid2025diversity,
  title={Diversity and multiplexing in quantum MIMO channels},
  author={ur Rehman, Junaid and Oleynik, Leonardo and Koudia, Seid and Bayraktar, Mert and Chatzinotas, Symeon},
  journal={EPJ Quantum Technology},
  volume={12},
  number={1},
  pages={18},
  year={2025},
  publisher={Springer Berlin Heidelberg}
}

@article{koudia2025crosstalk,
  title={Crosstalk-resilient quantum MIMO for scalable quantum communications},
  author={Koudia, Seid and Chatzinotas, Symeon},
  journal={npj Quantum Information},
  volume={11},
  number={1},
  pages={162},
  year={2025},
  publisher={Nature Publishing Group UK London}
}

@INPROCEEDINGS {pengqce,
author = { Peng, Heyang and Koudia, Seid and Oleynik, Leonardo and Chatzinotas, Symeon },
booktitle = { 2025 IEEE International Conference on Quantum Computing and Engineering (QCE) },
title = {{ Performance Analysis of MDI-QKD in Thermal-Loss and Phase Noise Channels }},
year = {2025},
volume = {},
ISSN = {},
pages = {976-983},
doi = {10.1109/QCE65121.2025.00109},
url = {https://doi.ieeecomputersociety.org/10.1109/QCE65121.2025.00109},
publisher = {IEEE Computer Society},
address = {Los Alamitos, CA, USA},
month =sep}

@article{peng2025performance,
  title={Performance Analysis of MDI-QKD in Thermal-Loss and Phase Noise Channels},
  author={Peng, Heyang and Koudia, Seid and Oleynik, Leonardo and Chatzinotas, Symeon},
  journal={arXiv preprint arXiv:2504.16561},
  year={2025}
}

@book{Ishimaru1999,
  author    = {A. Ishimaru},
  title     = {Wave Propagation and Scattering in Random Media},
  publisher = {IEEE Press},
  year      = {1999}
}

@article{Allen1992,
  author  = {L. Allen and M. W. Beijersbergen and R. J. C. Spreeuw and J. P. Woerdman},
  title   = {Orbital angular momentum of light and the transformation of Laguerre--Gaussian laser modes},
  journal = {Physical Review A},
  volume  = {45},
  pages   = {8185--8189},
  year    = {1992},
  doi     = {10.1103/PhysRevA.45.8185}
}

@article{Willner2015,
  author  = {A. E. Willner and others},
  title   = {Optical communications using orbital angular momentum beams},
  journal = {Advances in Optics and Photonics},
  volume  = {7},
  number  = {1},
  pages   = {66--106},
  year    = {2015},
  doi     = {10.1364/AOP.7.000066}
}

@book{Wilde2017,
  author    = {M. M. Wilde},
  title     = {Quantum Information Theory},
  publisher = {Cambridge University Press},
  edition   = {2},
  year      = {2017},
  doi       = {10.1017/9781316809976}
}

@book{Goldsmith2005,
  author    = {A. Goldsmith},
  title     = {Wireless Communications},
  publisher = {Cambridge University Press},
  year      = {2005},
  doi       = {10.1017/CBO9780511841224}
}
\end{document}